\newtheorem{theorem}{Theorem}[section]
\newtheorem{corollary}{Corollary}[theorem]
\newtheorem{lemma}[theorem]{Lemma}
\newtheorem{definition}{Definition}
\newtheorem{proposition}[theorem]{Proposition}
\newsavebox{\@brx}
\newcommand{\llangle}[1][]{\savebox{\@brx}{\(\m@th{#1\langle}\)}%
  \mathopen{\copy\@brx\kern-0.5\wd\@brx\usebox{\@brx}}}
\newcommand{\rrangle}[1][]{\savebox{\@brx}{\(\m@th{#1\rangle}\)}%
  \mathclose{\copy\@brx\kern-0.5\wd\@brx\usebox{\@brx}}}
\title{Analytical Performance Estimations for Quantum Repeater Network Scenarios
\thanks{Authors' email addresses: yzang@uchicago.edu, chungmiranda@anl.gov, kettimut@anl.gov, msuchara@microsoft.com, tzh@uchicago.edu.}}
\author{
\IEEEauthorblockN{Allen Zang\IEEEauthorrefmark{1},
Joaquin Chung\IEEEauthorrefmark{2},
Rajkumar Kettimuthu\IEEEauthorrefmark{2},
Martin Suchara\IEEEauthorrefmark{3},
Tian Zhong\IEEEauthorrefmark{1}
}

\IEEEauthorblockA{\IEEEauthorrefmark{1}University of Chicago, Chicago, IL, USA,
\IEEEauthorrefmark{2}Argonne National Laboratory, Lemont, IL, USA,\\
\IEEEauthorrefmark{3}Microsoft Corporation, Redmond, WA, USA
}}
\def\BibTeX{{\rm B\kern-.05em{\sc i\kern-.025em b}\kern-.08em
    T\kern-.1667em\lower.7ex\hbox{E}\kern-.125emX}}
\begin{document}

\maketitle

\begin{abstract}
    Quantum repeater chains will form the backbone of future quantum networks that distribute entanglement between network nodes. Therefore, it is important to understand the entanglement distribution performance of quantum repeater chains, especially their throughput and latency. By using Markov chains to model the stochastic dynamics in quantum repeater chains, we offer \textit{analytical estimations} for long-run throughput and on-demand latency of continuous entanglement distribution. We first study single-link entanglement generation using general multiheralded protocols. We then model entanglement distribution with entanglement swapping over two links, using either a single- or a double-heralded entanglement generation protocol. We also demonstrate how the two-link results offer insights into the performance of general $2^k$-link nested repeater chains. Our results enrich the quantitative understanding of quantum repeater network performance, especially the dependence on system parameters. The analytical formulae themselves are valuable reference resources for the quantum networking community. They can serve as benchmarks for quantum network simulation validation or as examples of quantum network dynamics modeling using the Markov chain formalism.
\end{abstract}

\section{Introduction}
Quantum networks~\cite{kimble2008quantum,wehner2018quantum} will enable various distributed quantum information processing applications, such as distributed quantum computing~\cite{cacciapuoti2019quantum,cuomo2020towards}, sensing~\cite{zhang2021distributed}, and secure communication~\cite{mehic2020quantum}. This vision of quantum networks has motivated significant experimental efforts~\cite{moehring2007entanglement,ritter2012elementary,hofmann2012heralded,bernien2013heralded,kalb2017entanglement,humphreys2018deterministic,pompili2021realization,hermans2022qubit,pompili2022experimental,knaut2024entanglement,liu2024creation}. 

Long-distance entanglement distribution service is impossible without quantum repeater chains, and it is thus necessary to examine quantum repeater chains' entanglement distribution throughput and latency, because they are two of the most important metrics for practical applications. Although quantum network simulators~\cite{coopmans2021netsquid,wu2021sequence,satoh2022quisp} could in principle evaluate complicated scenarios that are not analytically tractable, analytical formulae are more desirable for simple scenarios: They offer clear, quantitative intuition and insight into the explicit dependence of performance on system parameters, and the formulae can further serve as ground truths for the validation of quantum network simulators, giving us more confidence in their accuracy. In this work we use Markov chain~\cite{mc-razavi2009physical,mc-basu2012percolation,mc-jones2013high,mc-vardoyan2019stochastic,vinay2019statistical,shchukin2019waiting,mc-vardoyan2020on,mc-nain2020analysis,mc-dai2021entanglement,mc-vasantam2021stability,mc-vasantam2022throughput,mc-chandra2022scheduling,mc-nain2022analysis,mc-zubeldia2022matching,mc-coopmans2022improved,mc-shchukin2022optimal,mc-vardoyan2022quantum,mc-vardoyan2023capacity,mc-illiano2023design,inesta2023performance,mc-davies2023entanglement,mc-goodenough2024noise} to easily obtain new instances of exact, analytical formulae for estimating throughput and latency in quantum repeater network scenarios.
We note that some closely related works, for example,~\cite{vinay2019statistical,shchukin2019waiting,mc-coopmans2022improved,inesta2023performance}, focus on detailed evaluations of more complicated scenarios. Our objective is to provide simple, explicit results that offer intuition on the most important and straightforward figures of merit for specific network configurations and protocols. We also note some fundamental limits of quantum communication~\cite{pirandola2017fundamental,pirandola2019end}.
We are also the first to generalize the double-heralded Barrett--Kok protocol (BKP)~\cite{barrett2005efficient} and examine from a networking perspective \textit{multiheralded} entanglement generation protocols.

We assume that the distributed entangled pair will be immediately used or buffered to free up quantum memories, so that entanglement generation and distribution are performed continuously~\cite{chakraborty2019distributed,kolar2022adaptive,inesta2023performance}. Specifically, we estimate in one- and two-link quantum repeater chains (i) the long-run throughput, where the continuous protocol is run for a sufficiently long time, and (ii) the on-demand latency, where the continuous protocol is initiated when a request is received and is terminated when one EPR pair is distributed. Additionally, we show that the two-link results allow us to estimate the mean throughput of $2^k$-link nested repeater chains.

This paper is organized as follows. We review the basics of Markov chains in Sec.~\ref{sec:background} and present the methodology in Sec.~\ref{sec:methodology}. We study the single-link scenario in Sec.~\ref{sec:1link} and the three-node chain in Sec.~\ref{sec:2link}, while we demonstrate the generalization of the two-link results in Sec.~\ref{sec:extend}. In Sec.~\ref{sec:conclusion} we summarize our work. 
\section{Markov chain preliminaries} \label{sec:background}
Here we present key definitions and facts about Markov chains. Proofs are omitted.
\begin{definition}[Markov chain]
    A discrete-time Markov chain is a sequence of random variables $X_0, X_1, X_2, \dots$ that satisfies Markovianity, that is, $P(X_{n+1}=x\vert X_0=x_0,X_1=x_1,\dots,X_n=x_n)=P(X_{n+1}=x\vert X_n=x_n)$, $\forall n\geq 0$, $\forall x\in\mathcal{S}$, where $\mathcal{S}$ is the state space that is a countable set of all possible values of $X_i,\forall i\in\mathbb{Z}^*$.
\end{definition}

A Markov chain can be represented by the \textit{transition matrix}, whose matrix element $P_{ij}$ represents the  probability of transitioning from state $i$ to $j$. According to Markovianity, given the probability distribution of system states $\pi_n$ at step $n$, the distribution at step $n+1$ is $\pi_{n+1}(j) = \sum_i\pi_n(i)P_{ij} \Leftrightarrow \pi_{n+1} = \pi_nP$, with the distributions being row vectors. In practice, the \textit{equilibrium distribution} is of particular interest.
\begin{definition}[Equilibrium]\label{def:equilibrium}
    The equilibrium distribution $\Tilde{\pi}$ of a Markov chain with transition matrix $P$ satisfies $\Tilde{\pi} = \Tilde{\pi}P$.
\end{definition}
We want to know whether a Markov chain has an equilibrium, whether it is unique, and whether it can be achieved. To answer these questions, we need additional formalism.
\begin{definition}[Accessibility]
    A state $j\in\mathcal{S}$ is accessible from state $i\in\mathcal{S}$ (denoted as $i\rightarrow j$) if $\exists t>0$ s.t. $(P^t)_{ij}>0$.
\end{definition}
\begin{definition}[Communication]
    States $i$ and $j$ communicate if $i\rightarrow j$ and $j\rightarrow i$ (denoted as $i\leftrightarrow j$). A communicating class $\mathcal{C}\subset\mathcal{S}$ is a set of states whose members communicate.
\end{definition}
\begin{definition}[Irreducibility]\label{def:ird}
    A Markov chain is irreducible if $i\leftrightarrow j,\ \forall i,j\in\mathcal{S}$.
\end{definition}
\begin{definition}[Periodicity]\label{def:period}
    A state $i$ is periodic if $d(i) = \gcd\{t:(P^t)_{ii}>0\}>1$, and aperiodic otherwise, where $\gcd$ denotes the greatest common divisor.
\end{definition}
Irreducibility and aperiodicity are important for the convergence to equilibrium. The following lemma helps us determine aperiodicity for irreducible chains.
\begin{lemma}
    If $i\leftrightarrow j$, then $d(i)=d(j)$.
\end{lemma}
\begin{corollary}\label{thm:ap_ird}
    An irreducible chain with a state whose one-step transition probability to itself is nonzero is aperiodic.
\end{corollary}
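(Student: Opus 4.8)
The plan is to first show that the distinguished state carrying the nonzero self-loop has period one, and then to propagate aperiodicity to the whole chain using irreducibility together with the preceding lemma. Let $i$ denote the state with $P_{ii} > 0$. The crucial observation is that $(P^1)_{ii} = P_{ii} > 0$, so the integer $t = 1$ belongs to the return-time set $\{t : (P^t)_{ii} > 0\}$ whose greatest common divisor defines the period $d(i)$ in Definition~\ref{def:period}.

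Since the greatest common divisor of any nonempty set of positive integers that contains $1$ is itself $1$, I would conclude immediately that $d(i) = 1$, and hence that state $i$ is aperiodic by Definition~\ref{def:period}. To upgrade this from the single state $i$ to every state, I would invoke irreducibility (Definition~\ref{def:ird}), which gives $i \leftrightarrow j$ for every $j \in \mathcal{S}$. The preceding lemma, asserting that communicating states share a common period, then yields $d(j) = d(i) = 1$ for all $j$, so every state is aperiodic and the chain is aperiodic as claimed.

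The argument is entirely elementary, and I do not anticipate any substantive obstacle. The only point that warrants care is matching the definitions precisely: one must verify that the single-step self-transition indeed places $t = 1$ into the return-time set, and that the class-invariance of the period furnished by the lemma is exactly the bridge that carries aperiodicity from the one distinguished state to the entire irreducible state space.
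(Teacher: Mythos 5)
Your proof is correct and follows exactly the route the paper intends: the self-loop puts $1$ in the return-time set so $d(i)=1$, and the preceding lemma together with irreducibility propagates $d(j)=1$ to every state (the paper omits the proof, but states the corollary immediately after that lemma precisely because this is the argument). No gaps.
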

We now present the convergence theorem~\ref{thm:converge} that will be important in studying long-run entanglement distribution.
\begin{theorem}[Convergence theorem, informal]\label{thm:converge}
    An irreducible and aperiodic Markov chain on a finite state space will converge to its unique equilibrium.
\end{theorem}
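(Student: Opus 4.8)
The plan is to prove the convergence theorem for a finite, irreducible, aperiodic Markov chain by establishing three things in sequence: existence of an equilibrium, its uniqueness, and actual convergence of $\pi_n$ to it from any starting distribution.

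First I would establish \emph{existence} of an equilibrium distribution. Since the chain lives on a finite state space, the transition matrix $P$ is a finite stochastic matrix, so the all-ones column vector is a right eigenvector with eigenvalue $1$; hence $1$ is an eigenvalue of $P$ and therefore also of $P^{\mathsf{T}}$, guaranteeing a nonzero left eigenvector $\tilde\pi$ with $\tilde\pi P = \tilde\pi$. The work is to show this eigenvector can be taken to have nonnegative entries summing to $1$, so that it is a genuine probability distribution. I would invoke the Perron--Frobenius theorem: irreducibility makes $P$ an irreducible nonnegative matrix, so its spectral radius (which is $1$ for a stochastic matrix) is a simple eigenvalue with a strictly positive eigenvector. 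Normalizing that eigenvector gives an equilibrium $\tilde\pi$ with all entries positive.

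Second, \emph{uniqueness} comes essentially for free from the same Perron--Frobenius input: because irreducibility forces the eigenvalue $1$ to be simple, its eigenspace is one-dimensional, so any two equilibrium distributions are scalar multiples of each other, and the normalization $\sum_j \tilde\pi(j)=1$ pins down the scalar. This fixes $\tilde\pi$ uniquely.

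Third, and this is where the real work lies, I would prove \emph{convergence} $\pi_n = \pi_0 P^n \to \tilde\pi$ for every initial distribution $\pi_0$. The clean route is again spectral: aperiodicity together with irreducibility lets me apply the Perron--Frobenius theorem in its sharper form, which guarantees that $1$ is the unique eigenvalue on the unit circle and strictly dominates all others in modulus, i.e. every other eigenvalue $\lambda$ satisfies $|\lambda|<1$. The main obstacle is precisely this strict-dominance claim, since without aperiodicity one can have eigenvalues that are roots of unity on the unit circle (producing oscillation rather than convergence); I expect to lean on Corollary~\ref{thm:ap_ird} and Definition~\ref{def:period} to rule this out. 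Given strict dominance, I would decompose the action of $P^n$ into the rank-one projection onto the Perron eigenvector plus a remainder controlled by the subdominant eigenvalues, and argue the remainder decays geometrically, so $P^n$ converges to the matrix each of whose rows is $\tilde\pi$; applying $\pi_0$ on the left then yields $\pi_n\to\tilde\pi$. A cleaner alternative that sidesteps Jordan-form bookkeeping for defective eigenvalues is the classical coupling argument: run two copies of the chain started from $\pi_0$ and from $\tilde\pi$, couple them so they move together once they meet, use irreducibility and aperiodicity to show they meet in finite time with probability one, and conclude $\|\pi_n-\tilde\pi\|_{\mathrm{TV}}\to 0$. Either way, the delicate point requiring care is showing aperiodicity is exactly what prevents the distribution from cycling forever without settling.
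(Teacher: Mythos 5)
The paper does not prove this theorem: Section~\ref{sec:background} states explicitly that ``Proofs are omitted,'' and the result is presented (and labeled ``informal'') as standard background, so there is no in-paper argument to compare yours against. Judged on its own, your outline is the standard and correct textbook route: existence and uniqueness of the equilibrium from Perron--Frobenius applied to the irreducible stochastic matrix $P$, and convergence from the fact that aperiodicity forces every eigenvalue other than $1$ to have modulus strictly less than $1$ (equivalently, from a coupling argument in which two copies of the chain meet in almost surely finite time). You correctly identify the one genuinely delicate point, namely that without aperiodicity there can be roots of unity on the unit circle and the distribution oscillates rather than converges. What you have written is a plan rather than a complete proof --- the strict spectral dominance claim, or the finiteness of the coupling time, is precisely the step that still needs to be carried out in detail (e.g., via the standard fact that irreducibility plus aperiodicity implies $P^{t}$ has all entries strictly positive for some finite $t$) --- but there is no gap in the strategy, and either of your two proposed routes closes it.
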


We go beyond equilibrium and consider the hitting time.
\begin{definition}[Hitting time]
    For a Markov chain $\{X_{i\geq 0}\}$, $t_x=\min\{t\geq 1: X_t=x\}$ is called the hitting time for state $x$.
\end{definition}
For an irreducible and aperiodic Markov chain, we have powerful tools to obtain the mean and the variance of time to hit state $j$ when the chain starts from state $i$.
\begin{definition}[Fundamental matrix]
    For an irreducible and aperiodic Markov chain with $|\mathcal{S}|$ states, the fundamental matrix $N_i$ with respect to  state $i$ is $(I_{|\mathcal{S}|-1}-Q)^{-1}$, where $I_{|\mathcal{S}|-1}$ is an $(|\mathcal{S}|-1)$-dimensional identity matrix and $Q$ is the matrix obtained from removing the column and the row corresponding to state $i$ from the transition matrix.
\end{definition}
\begin{theorem}[Mean hitting time]\label{thm:mean_ht}
    For an irreducible and aperiodic Markov chain, the expected hitting time $\mathbb{E}[t_j\vert X_0=i]=\bm{t}_i$, where $\bm{t}=N_j\cdot\bm{1}$ and $\bm{1}$ is a column vector of length $(|\mathcal{S}|-1)$ with all entries being 1.
\end{theorem}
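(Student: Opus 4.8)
The plan is to combine a one-step conditioning argument with a spectral-radius estimate for the substochastic matrix $Q$. Writing $\bm{t}_i=\mathbb{E}[t_j\vert X_0=i]$ for each $i\neq j$, I would first condition on the first transition out of $i$: with probability $P_{ij}$ the chain hits $j$ in a single step, and with probability $P_{ik}$ for $k\neq j$ it moves to $k$ and then requires a further expected time $\bm{t}_k$. Using $\sum_k P_{ik}=1$, this collapses to $\bm{t}_i = 1 + \sum_{k\neq j}P_{ik}\bm{t}_k$, i.e. in matrix form $\bm{t} = \bm{1} + Q\bm{t}$, where $\bm{t}$ and $\bm{1}$ are indexed by $\mathcal{S}\setminus\{j\}$. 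Rearranging gives $(I_{|\mathcal{S}|-1}-Q)\bm{t}=\bm{1}$, so that $\bm{t}=N_j\bm{1}$, provided $I_{|\mathcal{S}|-1}-Q$ is invertible and the $\bm{t}_i$ are finite.

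The crux is to justify the invertibility of $I_{|\mathcal{S}|-1}-Q$ and the finiteness of the hitting times simultaneously. The key observation is that $(Q^n)_{ik}$ equals the probability that the chain, started at $i\neq j$, sits in state $k\neq j$ at step $n$ having avoided $j$ at all earlier steps, so that $\sum_{k\neq j}(Q^n)_{ik}=P(t_j>n\vert X_0=i)$. I would then bound these row sums: because the chain is irreducible on a finite state space, from every state $i\neq j$ there is a positive-probability path to $j$ of length at most $m:=|\mathcal{S}|$, whence $\alpha:=\max_{i\neq j}\sum_{k\neq j}(Q^m)_{ik}<1$. Submultiplicativity of these row sums then yields geometric decay, $P(t_j>\ell m\vert X_0=i)\leq\alpha^{\ell}$, which proves at once that $\mathbb{E}[t_j\vert X_0=i]<\infty$ and that $\sum_{n\geq 0}Q^n$ converges.

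Finally I would tie the pieces together. The geometric bound shows the spectral radius of $Q$ is strictly below $1$, so the Neumann series $\sum_{n\geq 0}Q^n$ converges to $(I_{|\mathcal{S}|-1}-Q)^{-1}=N_j$, which also certifies the invertibility assumed above. One can then either conclude directly from the linear system $(I_{|\mathcal{S}|-1}-Q)\bm{t}=\bm{1}$, or, more transparently, sum the tail-probability identity $\mathbb{E}[t_j\vert X_0=i]=\sum_{n\geq 0}P(t_j>n\vert X_0=i)=\sum_{n\geq 0}\sum_{k\neq j}(Q^n)_{ik}=(N_j\bm{1})_i$, which is exactly the claim $\bm{t}=N_j\bm{1}$.

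I expect the spectral-radius step to be the main obstacle: finiteness of the $\bm{t}_i$ cannot be taken for granted a priori, and deriving the uniform geometric tail bound is precisely where irreducibility and the finiteness of $\mathcal{S}$ are genuinely used (aperiodicity, notably, is not needed for this result). Care is also required so that the first-step recursion is not a vacuous $\infty=\infty$ identity; establishing the geometric decay before invoking the recursion cleanly removes that gap.
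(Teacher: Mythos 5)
The paper does not actually prove this statement: Section~\ref{sec:background} explicitly says ``Proofs are omitted,'' and Theorem~\ref{thm:mean_ht} is quoted there as a standard fact from Markov chain theory. So there is no in-paper argument to compare yours against. Your proposal is, however, a correct and complete proof, and it is the standard one: first-step conditioning gives $(I_{|\mathcal{S}|-1}-Q)\bm{t}=\bm{1}$, the identification of $\sum_{k\neq j}(Q^n)_{ik}$ with $P(t_j>n\mid X_0=i)$ plus irreducibility on a finite state space gives the uniform geometric tail bound, and that simultaneously yields finiteness of the expectations and convergence of the Neumann series $\sum_{n\geq 0}Q^n=N_j$. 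Your closing route via $\mathbb{E}[t_j\mid X_0=i]=\sum_{n\geq 0}P(t_j>n\mid X_0=i)$ is the cleanest way to avoid the $\infty=\infty$ pitfall in the recursion, and your observation that aperiodicity is not actually needed (only irreducibility and finiteness of $\mathcal{S}$) is accurate --- the paper's hypothesis is stronger than necessary, presumably because it states all hitting-time results under the same blanket assumptions used for the convergence theorem.
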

\begin{theorem}\label{thm:mean_ht_2}
    For an irreducible and aperiodic Markov chain, the expected hitting time for state $i$ when starting from state $i$ is $\mathbb{E}[t_i\vert X_0=i]=1/\Tilde{\pi}_i$, where $\Tilde{\pi}_i$ is the probability of state $i$ in the unique equilibrium distribution.
\end{theorem}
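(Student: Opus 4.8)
The plan is to reduce the claim to the stationarity relation $\tilde{\pi} = \tilde{\pi}P$ of Definition~\ref{def:equilibrium} through a first-step analysis of the hitting time, bypassing any explicit use of the fundamental matrix. First I would introduce the mean-hitting-time function $h_j := \mathbb{E}[t_i \vert X_0 = j]$ for every $j \in \mathcal{S}$, so that the target quantity is exactly $h_i = \mathbb{E}[t_i \vert X_0 = i]$. Because the chain is irreducible and aperiodic on a finite state space, Theorem~\ref{thm:mean_ht} guarantees that the $h_k$ with $k \neq i$ are finite (the fundamental matrix exists), and then $h_i$ is finite through its own first-step equation below; this finiteness is what will license the finite rearrangement of sums.

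Next I would condition on the first step. Since $t_i \geq 1$ by definition and landing on $i$ terminates the count, conditioning on $X_1 = k$ and using $\sum_k P_{jk} = 1$ yields, for \emph{every} $j$ (including $j = i$), the balance equation $h_j = 1 + \sum_{k \neq i} P_{jk} h_k$, where the sum deliberately omits $k = i$ because reaching $i$ contributes no further waiting. The key step is then to weight this identity by the equilibrium distribution: multiplying by $\tilde{\pi}_j$ and summing over all $j$ gives $\sum_j \tilde{\pi}_j h_j = 1 + \sum_{k \neq i} h_k \big(\sum_j \tilde{\pi}_j P_{jk}\big)$, where the constant term collapses to $\sum_j \tilde{\pi}_j = 1$ since $\tilde{\pi}$ is a probability distribution. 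Invoking $\tilde{\pi} = \tilde{\pi}P$ reduces the inner sum to $\tilde{\pi}_k$, so the right-hand side becomes $1 + \sum_{k \neq i} \tilde{\pi}_k h_k$. Subtracting the common tail $\sum_{k \neq i} \tilde{\pi}_k h_k$ from both sides then leaves precisely $\tilde{\pi}_i h_i = 1$, i.e. $h_i = 1/\tilde{\pi}_i$. Division is legitimate because irreducibility on a finite state space forces $\tilde{\pi}_i > 0$.

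I expect the main obstacle to be the careful bookkeeping at state $i$ rather than any hard analysis: the subtlety is that $t_i$ is a \emph{return} time ($t \geq 1$), not a first entrance from outside, so the first-step equation must hold uniformly in $j$ with the $k = i$ term omitted, and one must check that this single convention simultaneously produces the correct return-time equation at $j = i$ and the correct passage-time equations at $j \neq i$. An alternative I would keep in reserve is the renewal/ergodic route: by Theorem~\ref{thm:converge} the long-run fraction of time spent in state $i$ equals $\tilde{\pi}_i$, while renewal theory identifies that same fraction with the reciprocal of the mean return time $h_i$, again giving $h_i = 1/\tilde{\pi}_i$. Since the stationarity-weighting argument is cleaner and self-contained, I would present it as the primary proof and retain the ergodic picture only as intuition.
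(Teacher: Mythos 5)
The paper itself offers no proof of this theorem---Section~II states explicitly that ``Proofs are omitted''---so there is nothing to compare your argument against; what you have written is a correct, self-contained proof of this classical fact (Kac's recurrence formula). Your first-step equation $h_j = 1 + \sum_{k\neq i} P_{jk}h_k$ is valid uniformly in $j$, including $j=i$, precisely because $t_i\geq 1$ is a return time, and the stationarity-weighting step correctly collapses $\sum_j\tilde{\pi}_jP_{jk}$ to $\tilde{\pi}_k$ so that the common tail $\sum_{k\neq i}\tilde{\pi}_kh_k$ cancels, leaving $\tilde{\pi}_ih_i=1$. The two side conditions you flag are also handled properly: finiteness of the $h_k$ follows from the existence of the fundamental matrix on a finite state space (the spectral radius of $Q$ is strictly less than one for an irreducible chain), which licenses the subtraction, and $\tilde{\pi}_i>0$ follows from irreducibility, which licenses the division. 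One small caveat worth making explicit if you write this up: the finite-state-space hypothesis is not stated in the theorem itself but is inherited from the surrounding context (the convergence theorem and the fundamental-matrix definition both assume it), and your cancellation argument genuinely needs it---in a countable-state positive-recurrent chain one would have to argue summability of $\sum_k\tilde{\pi}_kh_k$ separately. Your reserve argument via the ergodic theorem is the other standard route and is equally valid, but the stationarity-weighting proof you chose as primary is the cleaner one given the toolkit the paper sets up.
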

\begin{theorem}[Variance of hitting time]\label{thm:var_ht}
    The variance of hitting time for state $j$ starting from state $i$ in an irreducible and aperiodic Markov chain is the $i$th entry of $(2N_j-I_{|\mathcal{S}|-1})\bm{t} - \bm{t}\odot\bm{t}$, where $\odot$ denotes the Hadamard product.
\end{theorem}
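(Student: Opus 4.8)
The plan is to obtain the variance as the second moment minus the squared mean of the hitting time, computing each by first-step analysis and then reusing the already-established identity $\bm{t}=N_j\bm{1}$. Write $\bm{t}$ for the mean hitting time vector of Theorem~\ref{thm:mean_ht} (so its $i$th entry is $\mathbb{E}[t_j\vert X_0=i]$ for $i\neq j$), and introduce a companion vector $\bm{s}$ whose $i$th entry is the second moment $\mathbb{E}[t_j^2\vert X_0=i]$. Since $\mathrm{Var}(t_j\vert X_0=i)=\mathbb{E}[t_j^2\vert X_0=i]-(\mathbb{E}[t_j\vert X_0=i])^2$, the squared-mean contribution is exactly $\bm{t}\odot\bm{t}$, so it suffices to show $\bm{s}=(2N_j-I_{|\mathcal{S}|-1})\bm{t}$.

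First I would condition on the first step. From a state $i\neq j$ we have $t_j=1+t_j'$, where $t_j'$ is the hitting time measured from $X_1$ and vanishes when $X_1=j$. Squaring, taking expectations, and collapsing the constant terms via $\sum_k P_{ik}=1$ yields the componentwise recursion $\bm{s}_i=1+2\sum_{k\neq j}Q_{ik}\bm{t}_k+\sum_{k\neq j}Q_{ik}\bm{s}_k$, which in matrix form reads $\bm{s}=\bm{1}+2Q\bm{t}+Q\bm{s}$.

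Next I would eliminate the cross term using the mean-hitting-time recursion $\bm{t}=\bm{1}+Q\bm{t}$, obtained by rearranging $\bm{t}=N_j\bm{1}=(I_{|\mathcal{S}|-1}-Q)^{-1}\bm{1}$ from Theorem~\ref{thm:mean_ht}, so that $Q\bm{t}=\bm{t}-\bm{1}$. Substituting gives $(I_{|\mathcal{S}|-1}-Q)\bm{s}=2\bm{t}-\bm{1}$, and applying the fundamental matrix $N_j=(I_{|\mathcal{S}|-1}-Q)^{-1}$ produces $\bm{s}=N_j(2\bm{t}-\bm{1})=2N_j\bm{t}-N_j\bm{1}$. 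Because $N_j\bm{1}=\bm{t}$, this collapses to $\bm{s}=(2N_j-I_{|\mathcal{S}|-1})\bm{t}$, and subtracting $\bm{t}\odot\bm{t}$ delivers the stated formula.

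The main obstacle is the bookkeeping in the first-step expansion rather than any deep difficulty: the cross term $2t_j'$ produced by $(1+t_j')^2$ is precisely what couples the second-moment recursion back to the \emph{mean} vector $\bm{t}$, and one must fold the absorbing case $X_1=j$ correctly into the row-sum identity $\sum_k P_{ik}=1$ when reducing the full transition matrix to $Q$. Once the recursion is set up, the remainder is routine linear algebra resting on $\bm{t}=N_j\bm{1}$.
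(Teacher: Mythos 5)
Your derivation is correct: the first-step recursion $\bm{s}=\bm{1}+2Q\bm{t}+Q\bm{s}$ combined with $Q\bm{t}=\bm{t}-\bm{1}$ and $N_j=(I_{|\mathcal{S}|-1}-Q)^{-1}$ gives $\bm{s}=(2N_j-I_{|\mathcal{S}|-1})\bm{t}$ exactly as claimed, and subtracting $\bm{t}\odot\bm{t}$ yields the variance. The paper states this as a standard preliminary and explicitly omits its proof, so there is nothing to compare against; your argument is the textbook first-step analysis, and the only point you leave implicit is that finiteness of the second moment (guaranteed here by irreducibility on a finite state space, which gives geometric tails for $t_j$) justifies the algebraic manipulations.
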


\section{Analytical estimation methodology}\label{sec:methodology}
First we show that the above convergence and hitting time results are applicable to continuous protocols.
\begin{proposition}\label{thm:ir_ap_continuous_protocol}
    Markov chains of continuous protocols are irreducible and aperiodic.
\end{proposition}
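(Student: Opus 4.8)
The plan is to verify the two defining properties separately, relying on the structural features shared by every continuous protocol studied in this paper: a finite state space containing only the physically reachable memory/link configurations, a distinguished ``empty'' ground state $e$ in which no link has yet produced a heralded pair, and per-round heralding success probabilities that are strictly between $0$ and $1$.

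First I would fix this empty state $e$ and establish irreducibility by showing that $e$ both accesses and is accessed by every other state. Since communication ($\leftrightarrow$) is an equivalence relation, establishing $e\leftrightarrow s$ for all $s\in\mathcal{S}$ immediately yields $s\leftrightarrow s'$ for all pairs, which is exactly Definition \ref{def:ird}. For $e\rightarrow s$: because the state space contains only configurations the protocol can actually produce, every $s$ is reachable from $e$ by a finite sequence of heralding successes, each of positive probability, interleaved with the deterministic swapping and bookkeeping steps; the corresponding product of one-step probabilities is positive, so $(P^t)_{es}>0$ for some $t$. For $s\rightarrow e$: starting from any $s$, the continuous protocol returns to $e$ whenever an end-to-end pair is distributed and consumed, and a positive-probability path to such a distribution always exists, namely keep generating the still-missing links and then swap; after consumption the system resets to $e$, so $(P^{t'})_{se}>0$ for some $t'$.

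Second, aperiodicity would follow immediately from Corollary \ref{thm:ap_ird} once I exhibit a single state with nonzero one-step self-transition probability. The state $e$ is the natural witness: in one round every link attempts heralded generation and, with probability equal to the product of the per-link failure probabilities $\prod_\ell (1-p_\ell)>0$, which is strictly positive because each $p_\ell<1$, all attempts fail and the chain remains at $e$. Hence $P_{ee}>0$, and since the chain is already shown to be irreducible, Corollary \ref{thm:ap_ird} delivers aperiodicity.

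I expect the main obstacle to be not any single calculation but the fact that the claim is asserted uniformly over several distinct chains, namely the single-link multiheralded, the two-link single- and double-heralded, and the nested generalizations, so the argument must be phrased at the level of the shared structure rather than a specific transition matrix. The delicate point is the reachability claim $s\rightarrow e$: one must confirm that in each model a genuine reset to $e$ occurs upon distribution, rather than leaving residual loaded memories, and that the success probabilities are bounded away from $0$ and $1$, which is precisely what the heralded, probabilistic nature of the generation protocols guarantees.
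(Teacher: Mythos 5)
Your proposal is correct and follows essentially the same route as the paper's own (much terser) proof: irreducibility because every state communicates with the empty/failure state via positive-probability success and reset paths, and aperiodicity via the self-loop at that state (all heralding attempts failing in one round) combined with Corollary~\ref{thm:ap_ird}. Your version simply makes explicit the reachability arguments and the $0<p<1$ assumptions that the paper leaves implicit.
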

\begin{proof}
    In Markov chains for continuous protocols, every state is achievable from any other state. In practice,  there is always a nonzero probability to transition from a state representing failure in one step to itself. 
\end{proof}
\begin{proposition}\label{thm:fail_succ_equal}
    For Markov chains of continuous protocols, the row corresponding to the failure state is identical to that of the success state.
\end{proposition}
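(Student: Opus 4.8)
The plan is to argue directly from the physical meaning of the two distinguished states together with the Markov property, rather than from any explicit computation of $P$. In a continuous protocol the \emph{success} state is the configuration reached at the step in which an EPR pair has just been distributed, while the \emph{failure} state is the configuration from which the protocol must relaunch a fresh attempt. Since the row of $P$ associated with a state lists the one-step transition probabilities \emph{out of} that state, the claim is equivalent to asserting that the future evolution initiated from the success state coincides with that initiated from the failure state.

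First I would make explicit the standing assumption recalled in the introduction: upon success the distributed pair is \emph{immediately} consumed or buffered, thereby freeing the quantum memories involved. I would then observe that after this freeing operation the physical configuration of the repeater (all relevant memories empty and ready to begin a new attempt) is exactly the configuration that obtains after a failed attempt. In other words, both the success state and the failure state map to the same ``restart'' configuration from which the next elementary attempt proceeds.

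The core of the argument is then Markovianity. Because the one-step dynamics depend only on the current state, and because success and failure correspond to the identical restart configuration, the probability of moving to any target state $j$ is the same whether the chain currently occupies the success state $s$ or the failure state $f$. Hence $P_{sj}=P_{fj}$ for every $j\in\mathcal{S}$, which is precisely the statement that the two rows coincide.

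The step I expect to require the most care is the second one: verifying that success genuinely leaves \emph{no} residual imprint distinguishing it from failure for the purpose of the next round. One must confirm that the buffering or consumption operation returns every memory to its blank state and that the protocol's next attempt does not depend on the outcome of the previous one, for instance that there is no carried-over classical flag or lingering partially heralded link. Once this ``memoryless restart'' property is checked for the specific multiheralded and swapping protocols under study, the equality of the two rows follows immediately.
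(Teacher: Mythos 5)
Your argument is correct and is essentially the paper's own proof, elaborated: the paper likewise observes that both the success and failure states are the start of the next entanglement distribution attempt, hence have identical transition patterns. Your added care in invoking the immediate-consumption assumption and Markovianity explicitly is a reasonable expansion of the same idea, not a different route.
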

\begin{proof}
    Both the success state and the failure state are the start of the next entanglement distribution attempt. Therefore, their transition patterns are identical.
\end{proof}

\subsection{Throughput}
\begin{definition}[Throughput]
    The throughput $T$ is the number of successfully distributed entangled pairs per unit time.
\end{definition}
Suppose entanglement distribution is continuously attempted for $N\gg 1$ time steps. The average throughput is $\bar{T}=\mathbb{E}[n_\mathrm{S}]/(N\tau)$, where $n_\mathrm{S}$ is the number of visits to state $\mathrm{S}$ that represents success and $\tau$ is the assumed real time corresponding to the time step of the chain. If the chain with transition matrix $P$ starts at state $i$, the expected number of times of visiting $\mathrm{S}$ is $\mathbb{E}[n_\mathrm{S}\vert X_0=i]=\mathbb{E}[\sum_{k=1}^N\mathbbm{1}_\mathrm{S}^k\vert X_0=i]=\sum_{k=1}^N\left(P^k\right)_{i\mathrm{S}}$, where $\mathbbm{1}_\mathrm{S}^k$ denotes the indicator function at the $k$-th step that takes value 1 if in state $\mathrm{S}$ and 0 otherwise. Since the chain will converge according to Proposition~\ref{thm:ir_ap_continuous_protocol}, we have $\sum_{k=1}^N\left(P^k\right)_{i\mathrm{S}}\approx N\Tilde{\pi}_\mathrm{S}$, that is, $\bar{T}\approx\Tilde{\pi}_\mathrm{S}/\tau$. Moreover, we have the formal expression of throughput variance:
\begin{align}
    N^2\tau^2\mathrm{Var}[T] =& \sum_{k=1}^N\mathrm{Var}\left[\mathbbm{1}_\mathrm{S}^k\right] + 2\sum_{1\leq k<l\leq N}\mathrm{Cov}\left[\mathbbm{1}_\mathrm{S}^k,\mathbbm{1}_\mathrm{S}^l\right] \nonumber\\
    =& 2\sum_{1\leq k<l\leq N}\left[\left(P^{l-k}\right)_{\mathrm{S}\mathrm{S}}-\left(P^l\right)_{i\mathrm{S}}\right]\left(P^k\right)_{i\mathrm{S}} \nonumber\\
    & +\sum_{k=1}^N \left(P^k\right)_{i\mathrm{S}}\left[1-\left(P^k\right)_{i\mathrm{S}}\right].
\end{align}
Simply treating each time step as independent and identically distributed according to the equilibrium distribution gives the estimation $\mathrm{Var}[T]\approx\Tilde{\pi}_\mathrm{S}(1-\Tilde{\pi}_\mathrm{S})/(N\tau^2)$, and this could potentially result in significant relative estimation error due to covariance terms. Because of the $1/N$ scaling, however, the absolute error will still be small. In summary, the estimation of throughput is generally reduced to the calculation of $\Tilde{\pi}_\mathrm{S}$.

\subsection{Latency}
\begin{definition}[Latency]
    The latency $L$ is the time needed for an entangled pair to be distributed after initialization.
\end{definition}
Suppose entanglement distribution is initiated at the moment when a request for one EPR pair is received. In order to serve the request, entanglement distribution will be continuously attempted until one EPR pair is successfully distributed. The latency is then $L=t_\mathrm{S}\tau$, where $t_\mathrm{S}$ is the time for the chain to hit the success state $\mathrm{S}$ starting from the initial/failure state. Then, the calculation of mean latency and its variance is reduced to the derivation of the fundamental matrix of the Markov chain w.r.t. state $\mathrm{S}$, which is needed for the application of Theorem~\ref{thm:mean_ht} and Theorem~\ref{thm:var_ht}. If we  care only about mean latency, we can use Theorem~\ref{thm:mean_ht_2}, and $\Tilde{\pi}_\mathrm{S}$ is sufficient according to Proposition~\ref{thm:fail_succ_equal}.

\subsection{Example: single-heralded EG}
Assuming each EG attempt has a success probability $p$ and takes time $\tau$ and the entire continuous EG takes $N\tau$, the average throughput is just $\bar{T} = p/\tau$, which is exact $\forall N$ because the transition matrix is idempotent, that is, $P^k=P,\forall k>0$. This property also helps us  obtain the exact throughput variance $\mathrm{Var}[T]=p(1-p)/(N\tau^2)$. The mean latency can be obtained immediately from the equilibrium distribution as $\bar{L}=\tau/p$. A straightforward calculation gives $\mathrm{Var}[L]=(1-p)\tau^2/p^2$.
\section{One-link entanglement generation} \label{sec:1link}
We first analyze double-heralded protocols such as the BKP. Then we study general multiheralded EG that requires $n$ rounds of successful heralding.

\subsection{Double-heralded EG}\label{sec:tp_BKP}
Suppose the success probability of the first round is $p_1$ and for the second round is $p_2$, and both rounds take equal time $\tau$, which will be the time unit. The process can be modeled as a 3-state Markov chain where state 0 represents failure, state 1 represents success in the first round, and state 2 represents success in the second round, that is, the final success of EG:
\begin{enumerate}
    \item If the current state is 0, the next step is the first EG round. The probability to transition to state 1 is $p_1$, and the probability to stay in state 0 is $(1-p_1)$.
    \item If the current state is 1, the next step is the second EG round. The probability to transition to state 2 is $p_2$, and the probability to stay in state 0 is $(1-p_2)$.
    \item If the current state is 2, the next step is the first EG round. The probability to transition to state 1 is $p_1$, and the probability to stay in state 0 is $(1-p_1)$.
\end{enumerate}
The corresponding transition matrix is
\begin{equation}
    P_\mathrm{BKP} = 
    \begin{pmatrix}
        1-p_1 & p_1 & 0 \\
        1-p_2 & 0 & p_2 \\
        1-p_1 & p_1 & 0
    \end{pmatrix},
\end{equation}
from which we obtain the equilibrium probability of state 2, $\Tilde{\pi}_2 = p_1p_2/(1+p_1)$, as is visualized in Fig.~\ref{fig:2h-gen-throughput}. Besides the obvious feature that higher success probabilities lead to higher throughput, the asymmetry demonstrates different contributions from the first and the second heralding rounds. Specifically, the second round's success probability has larger impact, which can be seen from the fact that $p_1p_2/(1+p_1)-p_1p_2/(1+p_2) = p_1p_2(p_2-p_1)/[(1+p_1)(1+p_2)]$ is negative when $p_1>p_2$; that is, $\forall p_1>p_2$,  when the success probability of the first round is $p_1$ and that of the second round is $p_2$, the throughput is lower than if the success probabilities are interchanged. Intuitively, the reason  is that  failure in the second round results in the ``waste'' of two steps, whereas failure in the first round  ``wastes'' only one. 

\begin{figure}[t]
    \centering
    \begin{subfigure}[b]{0.49\columnwidth}
        \centering
        \includegraphics[width=\textwidth]{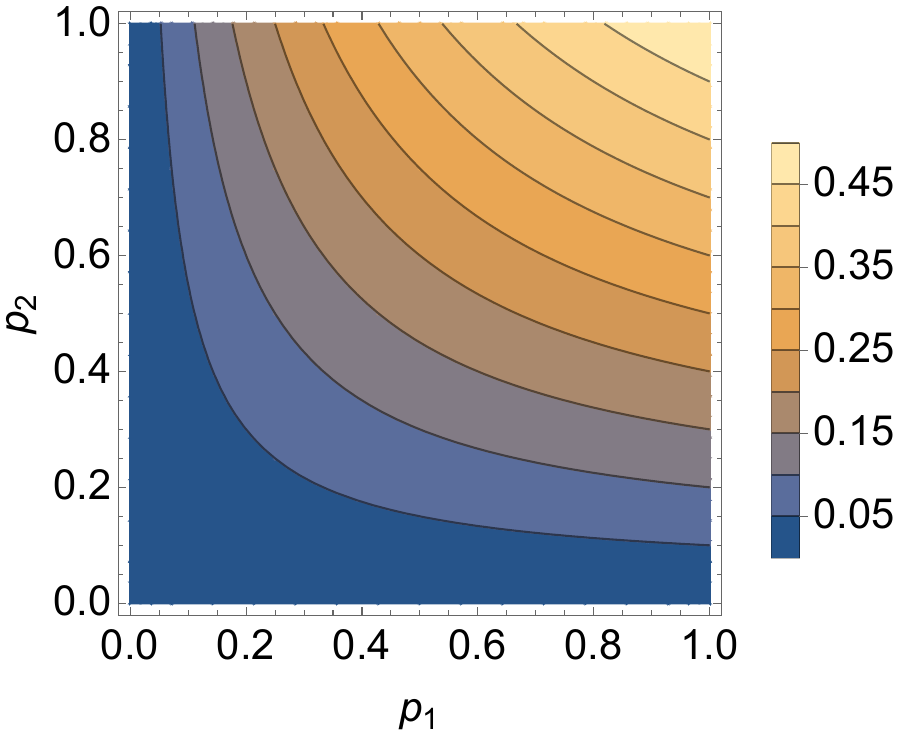}
        \caption{$\Tilde{\pi}_2$ as a function of $p_1$ and $p_2$.}
        \label{fig:2h-gen-throughput}
    \end{subfigure}
    \hfill
    \begin{subfigure}[b]{0.49\columnwidth}
        \centering
        \includegraphics[width=\textwidth]{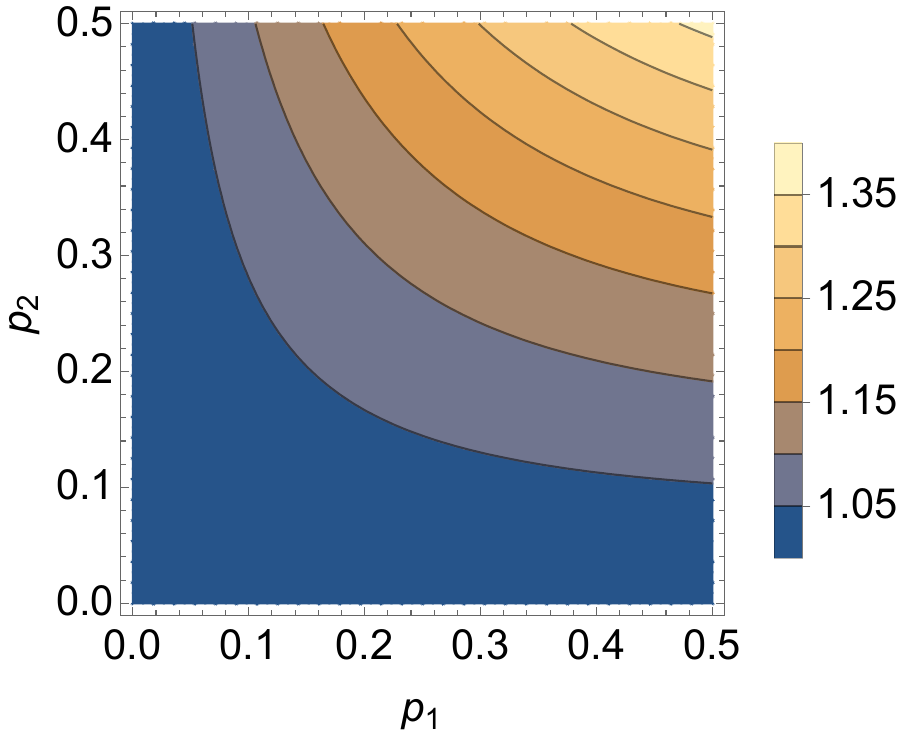}
        \caption{na\"ive vs. exact variance ratio.}
        \label{fig:2h-gen-var-comparison}
    \end{subfigure}
    \caption{Visualization of double-heralded EG throughput results.}
\end{figure}

For latency, we calculate the fundamental matrix w.r.t. state 2, from which we obtain the latency variance:
\begin{equation}
    \mathrm{Var}[L_\mathrm{BKP}]/\tau^2 = \left(\frac{1+p_1}{p_1p_2}\right)^2 - \frac{3+p_1}{p_1p_2}.
\end{equation}
We also visualize the mean latency and the ratio between standard deviation of latency and the mean latency in Fig.~\ref{fig:2h-gen-latency} and Fig.~\ref{fig:2h-gen-latency-std}, respectively.

Fortunately, we can analytically derive $P_\mathrm{BKP}^n$, from which we obtain the exact mean throughput:
\begin{equation}
    \tau\Bar{T}_\mathrm{BKP} = \frac{p_1p_2}{1+p_1} - \left[\frac{p_1p_2}{(1+p_1)^2}\frac{1-(-p_1)^N}{N}\right],
\end{equation}
which as expected converges to $\Tilde{\pi}_2/\tau$ in the large $N$ limit. We can also obtain the exact throughput variance $\forall N$, from which we focus on the leading order term:
\begin{equation}
    N\tau^2\mathrm{Var}[T_\mathrm{BKP}] \approx \frac{p_1p_2\left[(1+p_1)^2-p_1p_2(3+p_1)\right]}{(1+p_1)^3}.
\end{equation}
We then visualize the ratio between the na\"ive estimation $\Tilde{\pi}_2(1-\Tilde{\pi}_2)$ and the right-hand side of the above equation in Fig.~\ref{fig:2h-gen-var-comparison}. We observe that the na\"ive estimation offers high relative accuracy when $p_1,p_2$ are small. The reason is that with small success probabilities, most time steps correspond to independent attempts, whereas with high success probabilities, there are more consecutive events that are correlated.

\begin{figure}[t]
    \centering
    \begin{subfigure}[b]{0.49\columnwidth}
        \centering
        \includegraphics[width=\textwidth]{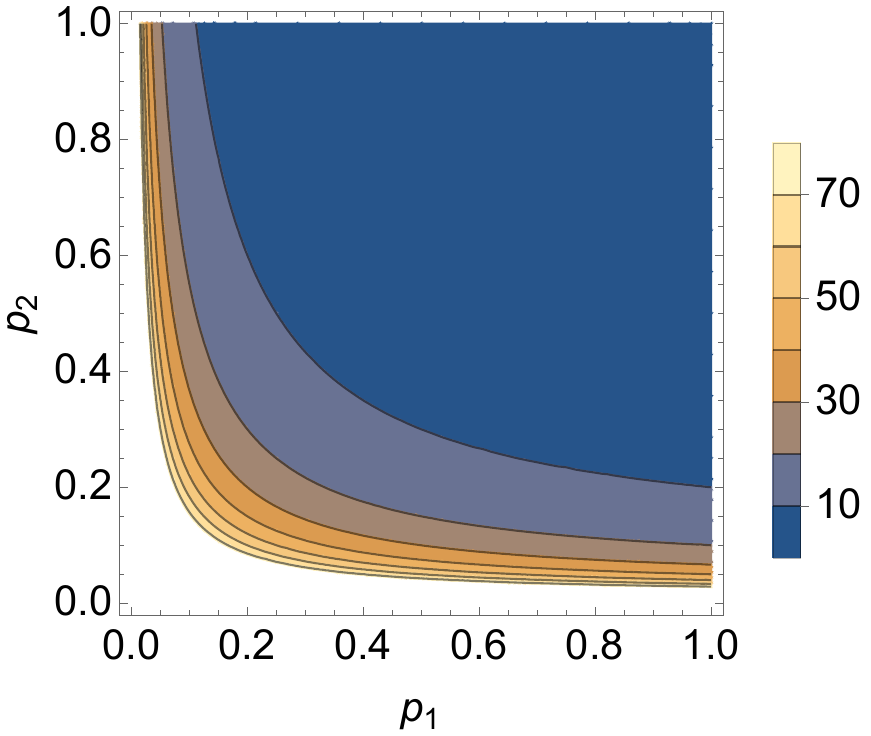}
        \caption{Mean latency (in $\tau$).}
        \label{fig:2h-gen-latency}
    \end{subfigure}
    \hfill
    \begin{subfigure}[b]{0.49\columnwidth}
        \centering
        \includegraphics[width=\textwidth]{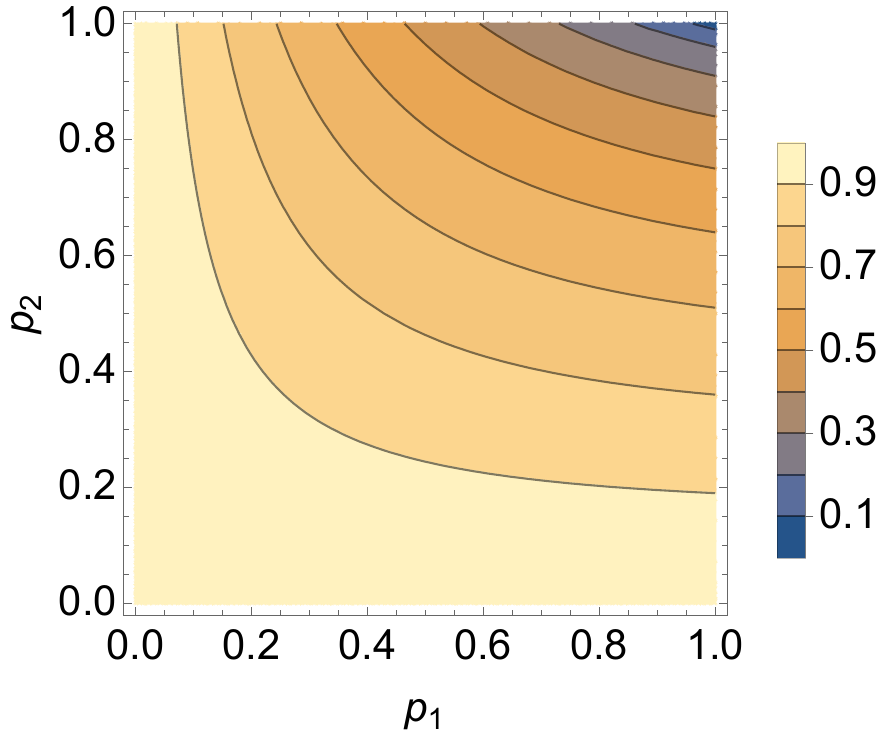}
        \caption{Std. vs. mean latency ratio.}
        \label{fig:2h-gen-latency-std}
    \end{subfigure}
    \caption{Visualization of double-heralded EG latency results.}
\end{figure}

\subsection{Generalization to multiheralded EG}
\label{sec:general}
It is natural to generalize the double-heralded EG to modeling multiheralded EG, whose overall success requires success in all $n$ rounds, with success probabilities of each round being $p_i,\ i=1,2,\dots,n$. Such protocols using multiple heralding rounds to ensure the correct generation of expected entangled state could offer higher entanglement fidelity conditioned on success at the cost of lower success rate. We again assume each round has identical duration $\tau$. Now the Markov chain contains $(n+1)$ states, where state $i$ corresponds to success of the $i$th round with $i=1,2,\dots,n$ and state 0 corresponds to failures. The possible state transitions are as follows:
\begin{enumerate}
    \item If the current state is $i$ ($i<n$), the next step is the $(i+1)$th EG round. The probability to transition to state $(i+1)$ is $p_{i+1}$, and the probability to transition to state 0 is $(1-p_{i+1})$.
    \item If the current state is $n$, the next step is the first EG round. The probability to transition to state 1 is $p_1$, and the probability to transition to state 0 is $(1-p_1)$.
\end{enumerate}
The corresponding transition matrix is
\begin{equation}
    P_n = 
    \begin{pmatrix}
        1-p_1 & p_1 & 0 & \dots & 0 \\
        1-p_2 & 0 & p_2 & \dots & 0 \\
        \vdots & \vdots & \vdots & \ddots & \vdots \\
        1-p_n & 0 & 0 & \dots & p_n \\
        1-p_1 & p_1 & 0 & \dots & 0 \\
    \end{pmatrix}.
\end{equation}
We obtain the equilibrium probability of state $n$, $\Tilde{\pi}_n$, by solving the left eigen equation of $P_n$:
\begin{equation}
    \Tilde{\pi}_n = \frac{\prod_{i=1}^np_i}{1 + \sum_{i=1}^{n-1}\prod_{j=1}^ip_j},
\end{equation}
from which we can estimate the mean and the variance of the throughput and the mean latency. One can easily verify that the results for double-heralded EG protocols are compatible with this general expression. In fact, we can analytically obtain the variance of latency using Theorem~\ref{thm:var_ht}:
\begin{equation}
\begin{aligned}
    \mathrm{Var}[L_n]/\tau^2 =& \left(\frac{1 + \sum_{i=1}^{n-1}\prod_{j=1}^ip_j}{\prod_{i=1}^np_i}\right)^2\\
    &+ \frac{2n-1 + \sum_{i=1}^{n-1}(2i-1)\prod_{j=1}^{n-i}p_j}{\prod_{i=1}^np_i}.
\end{aligned}
\end{equation}
\section{Two-link entanglement distribution} \label{sec:2link}
When two quantum network nodes do not have a direct connection, entanglement swapping is necessary. Here we study entanglement distribution in a 3-node quantum repeater chain, requiring EG on two links and ES on the middle node. We consider both single-heralded and double-heralded EG.

We assume the distance from both end nodes to the middle node is $L$. Thus each EG round has identical duration $\tau\approx L/c$. The time needed for ES is also roughly $\tau$ because of classical communication from the middle node to both end nodes. We assume that both end nodes have one quantum memory and the middle node contains two quantum memories. 

\subsection{Single-heralded EG}
\begin{figure}[t]
    \centering
    \begin{subfigure}[b]{0.49\columnwidth}
        \centering
        \includegraphics[width=\textwidth]{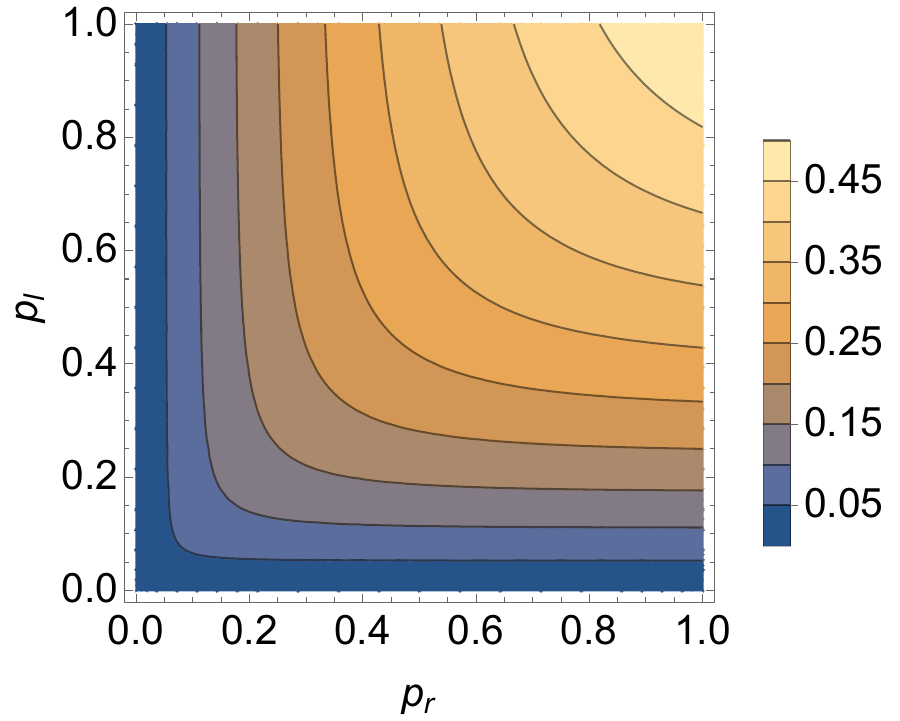}
        \caption{$\Tilde{\pi}_\mathrm{S}$ as function of $p_l$ and $p_r$.}
        \label{fig:1h-swap}
    \end{subfigure}
    \hfill
    \begin{subfigure}[b]{0.49\columnwidth}
        \centering
        \includegraphics[width=\textwidth]{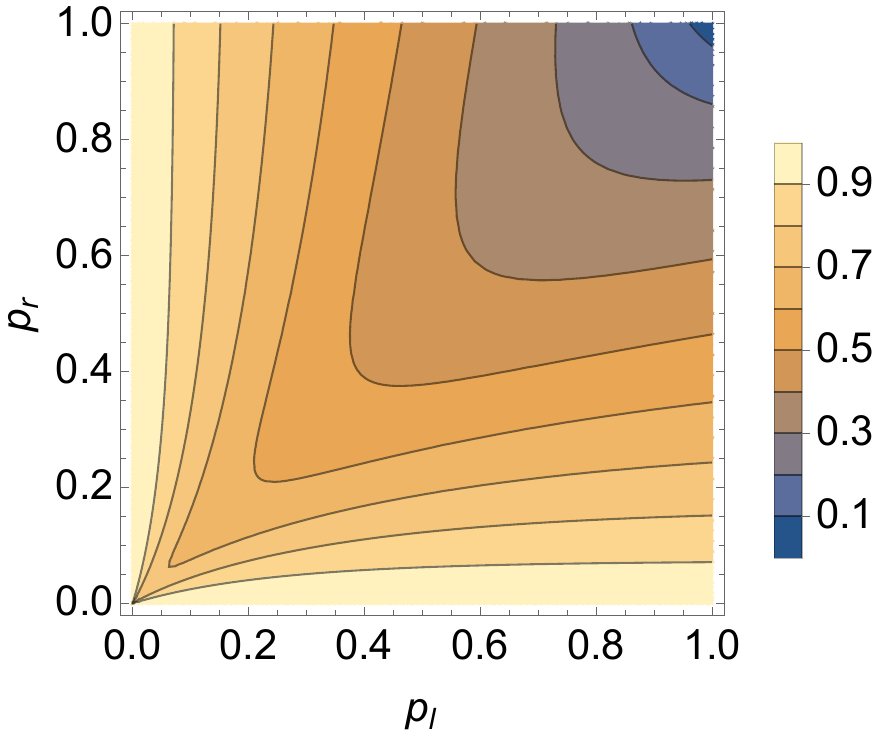}
        \caption{Std. vs. mean latency ratio.}
        \label{fig:1h-swap-latency-std}
    \end{subfigure}
    \caption{Visualization of results for two-link ES with single-heralded EG when $p_s=1$.}
\end{figure}

We assume the success probability for EG on the left link is $p_l$ and on the right link is $p_r$. We further consider the success probability $p_s$ of ES. The possible system states are as follows:
\begin{enumerate}
    \item State 00: both links with no EPR pair, or failure of ES.
    \item State 01: the left link with no EPR pair, the right link with an EPR pair.
    \item State 10: the left link with an EPR pair, the right link with no EPR pair.
    \item State 11: the left link with an EPR pair, the right link with an EPR pair.
    \item State S: ES success.
\end{enumerate}
Then the possible transitions between the states are as follows:
\begin{enumerate}
    \item If the current state is 00, the next step is EG on both links. The probability to stay in state 00 is $(1-p_l)(1-p_r)$, the probability to transition to state 01 is $(1-p_l)p_r$, the probability to transition to state 10 is $p_l(1-p_r)$, and the probability to transition to state 11 is $p_lp_r$.
    \item If the current state is 01, the next step is EG on the left link. The probability to stay in state 01 is $(1-p_l)$, and the probability to transition to state 11 is $p_l$.
    \item If the current state is 10, the next step is EG on the right link. The probability to stay in state 10 is $(1-p_r)$, and the probability to transition to state 11 is $p_r$.
    \item If the current state is 11, the next step is ES. The probability to transition to state 00 is $(1-p_s)$, and the probability to transition to state S is $p_s$.
    \item If the current state is S, the next step is EG on both links. The probability to transition to state 00 is $(1-p_l)(1-p_r)$, the probability to transition to state 01 is $(1-p_l)p_r$, the probability to transition to state 10 is $p_l(1-p_r)$, and the probability to transition to state 11 is $p_lp_r$.
\end{enumerate}
We can write the transition matrix as
\begin{equation}
    P_\mathrm{shs} = 
    \begin{pmatrix}
        \bar{p}_l\bar{p}_r & \bar{p}_lp_r & p_l\bar{p}_r & p_lp_r & 0\\
        0 & \bar{p}_l & 0 & p_l & 0\\
        0 & 0 & \bar{p}_r & p_r & 0\\
        \bar{p}_s & 0 & 0 & 0 & p_s\\
        \bar{p}_l\bar{p}_r & \bar{p}_lp_r & p_l\bar{p}_r & p_lp_r & 0\\
    \end{pmatrix},
\end{equation}
where the subscript ``shs'' denotes single-heralded EG and swapping and $\bar{p}_{l(r,s)}=1-p_{l(r,s)}$. The rows and columns of the transition matrix are both in the order of state 00, 01, 10, 11, and S. Then we can obtain the equilibrium probability of state S:
\begin{equation}\label{eqn:throughput_shs}
    \Tilde{\pi}_\mathrm{S,shs} = \frac{p_lp_rp_s[1-\bar{p}_l\bar{p}_r]}{2p_lp_r + \bar{p}_lp_r^2 + \bar{p}_rp_l^2 - p_lp_r\bar{p}_l\bar{p}_r},
\end{equation}
which is visualized as a function of $p_l$ and $p_r$ when $p_s=1$ in Fig.~\ref{fig:1h-swap}. The throughput is symmetric for $p_1$ and $p_2$ because in this scenario EG on the two links is independent, which is different from the two rounds of double-heralded EG. We also observe that, similar to Fig.~\ref{fig:2h-gen-throughput}, when one success probability is small, increasing the other will not significantly boost the throughput, which manifests the bottleneck effect.

We can further analytically derive the latency variance:
\begin{equation}
\begin{aligned}
    &\mathrm{Var}[L_\mathrm{shs}] = \left(\frac{p_lp_r + p_l^2 + p_r^2 + p_l^2p_r^2}{p_lp_rp_s(p_l + p_r - p_lp_r)}\right)^2\\
    &-\frac{\left(
    \begin{aligned}
        &p_l^2p_r(1-p_r)(2-p_r) + p_l^3(1-p_r)^2(3+p_r)\\
        &+ 3p_r^3 + p_lp_r(4+2p_r-5p_r^2)
    \end{aligned}
    \right)}{p_lp_rp_s(p_l + p_r - p_lp_r)^2}.
\end{aligned}
\end{equation}
We visualize the ratio between the standard deviation of latency $\sqrt{\mathrm{Var}[L_\mathrm{shs}]}$ and the mean latency $\tau/\Tilde{\pi}_\mathrm{S,shs}$ in Fig.~\ref{fig:1h-swap-latency-std}. We note that except for very high success probabilities on both links,  fixing one success probability and increasing the other could result in higher \textit{relative} fluctuations in latency, while the absolute value of the mean latency still drops.

\subsection{Double-heralded EG}
\begin{figure}[t]
    \centering
    \begin{subfigure}[b]{0.49\columnwidth}
        \centering
        \includegraphics[width=\textwidth]{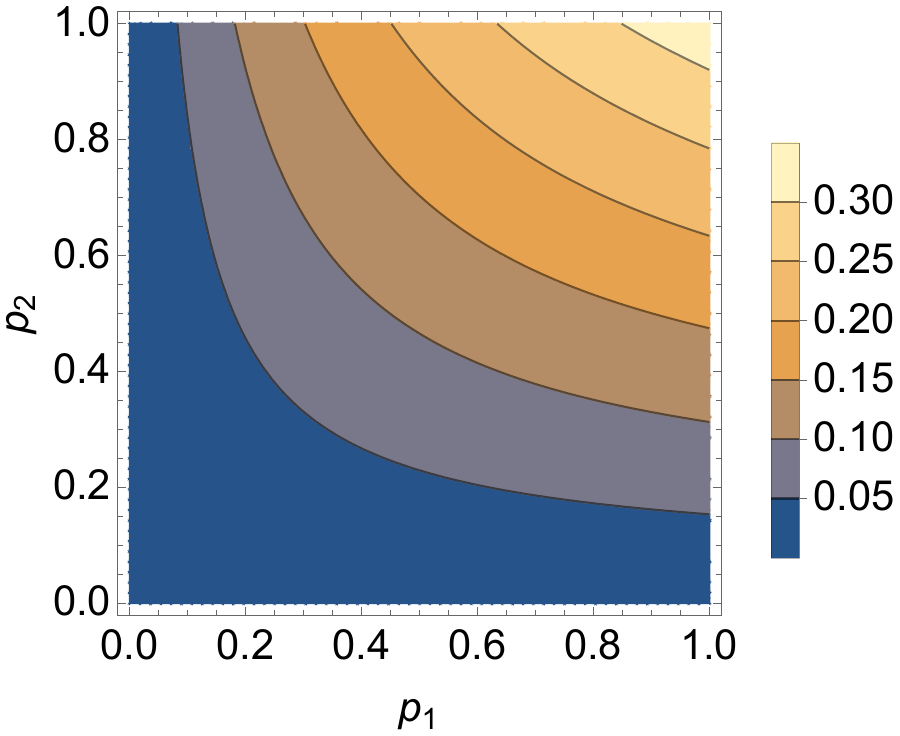}
        \caption{$\Tilde{\pi}_\mathrm{S}$, $p_{r1(2)}=p_{l1(2)}=p_{1(2)}$.}
        \label{fig:2h-swap-12}
    \end{subfigure}
    \hfill
    \begin{subfigure}[b]{0.49\columnwidth}
        \centering
        \includegraphics[width=\textwidth]{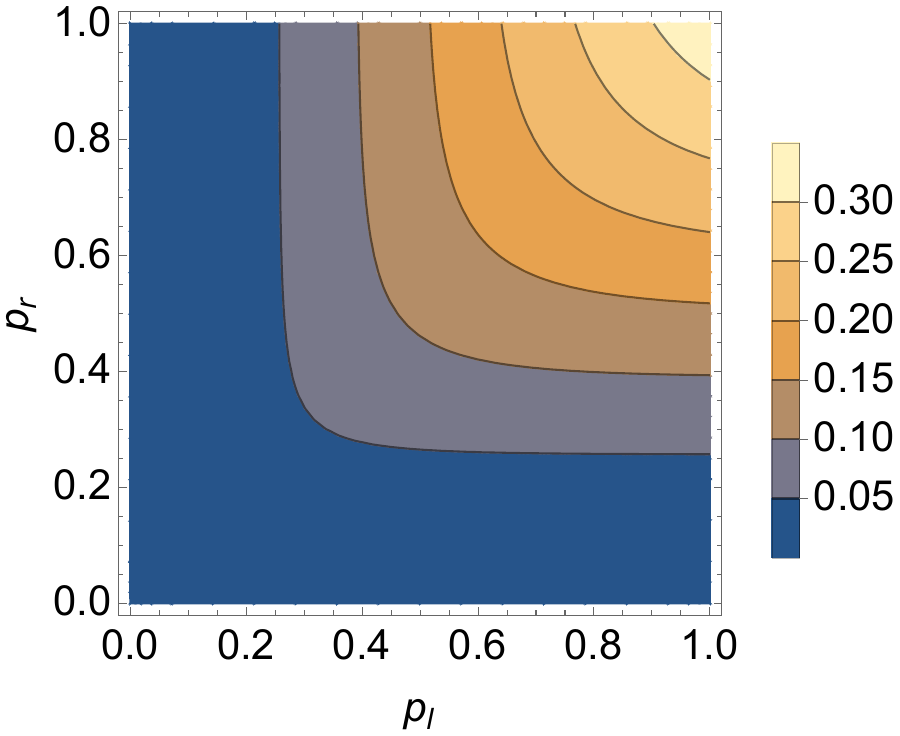}
        \caption{$\Tilde{\pi}_\mathrm{S}$, $p_{r(l)1}=p_{r(l)2}=p_{r(l)}$.}
        \label{fig:2h-swap-lr}
    \end{subfigure}
    \hfill
    \begin{subfigure}[b]{0.49\columnwidth}
        \centering
        \includegraphics[width=\textwidth]{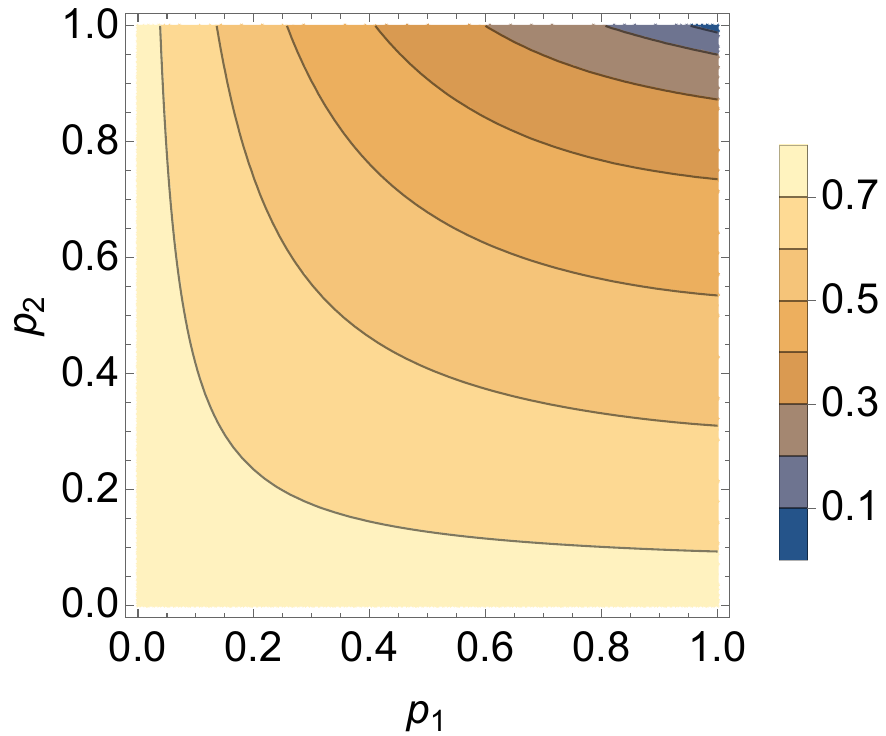}
        \caption{Std. vs. mean latency ratio, $p_{r1(2)}=p_{l1(2)}=p_{1(2)}$.}
        \label{fig:2h-swap-latency-std-12}
    \end{subfigure}
    \hfill
    \begin{subfigure}[b]{0.49\columnwidth}
        \centering
        \includegraphics[width=\textwidth]{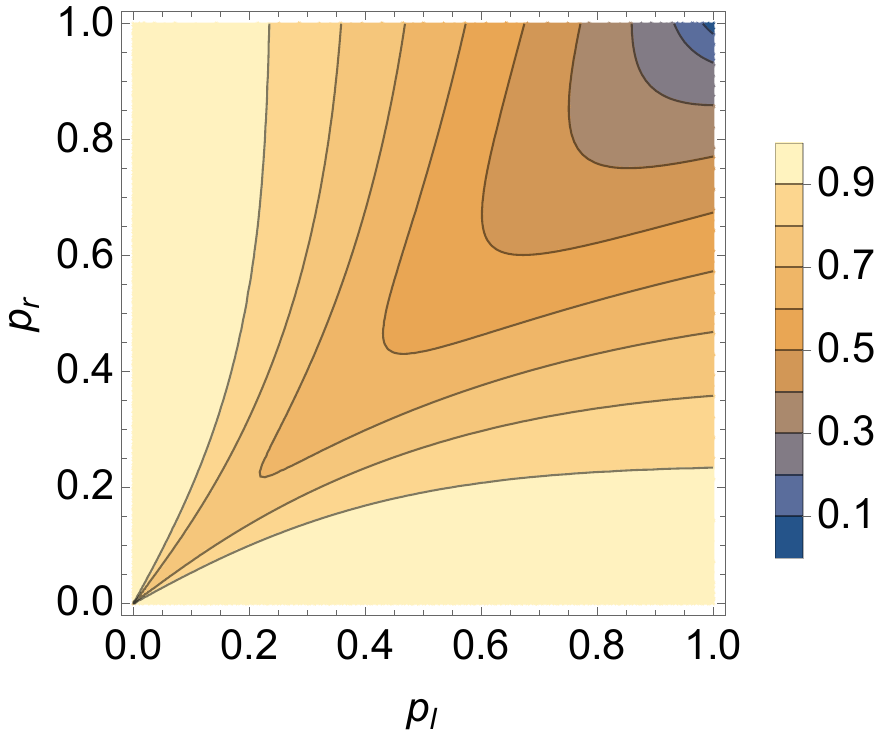}
        \caption{Std. vs. mean latency ratio, $p_{r(l)1}=p_{r(l)2}=p_{r(l)}$.}
        \label{fig:2h-swap-latency-std-lr}
    \end{subfigure}
    \caption{Visualization of results for two-link ES with double-heralded EG when $p_s=1$.}
\end{figure}

\begin{table*}    
\begin{equation}\label{eqn:double-heralded-pmat}
    P_\mathrm{dhs} = 
    \begin{blockarray}{ccccccccccc}
    00 & 01 & 02 & 10 & 11 & 12 & 20 & 21 & 22 & \mathrm{S} \\
    \begin{block}{(cccccccccc)c}
        \bar{p}_{l1}\bar{p}_{r1} & \bar{p}_{l1}p_{r1} & 0 & p_{l1}\bar{p}_{r1} & p_{l1}p_{r1} & 0 & 0 & 0 & 0 & 0 & 00 \\
        \bar{p}_{l1}\bar{p}_{r2} & 0 & \bar{p}_{l1}p_{r2} & p_{l1}\bar{p}_{r2} & 0 & p_{l1}p_{r2} & 0 & 0 & 0 & 0 & 01 \\
        0 & 0 & \bar{p}_{l1} & 0 & 0 & p_{l1} & 0 & 0 & 0 & 0 & 02 \\
        \bar{p}_{l2}\bar{p}_{r1} & \bar{p}_{l2}p_{r1} & 0 & 0 & 0 & 0 & p_{l2}\bar{p}_{r1} & p_{l2}p_{r1} & 0 & 0 & 10 \\
        \bar{p}_{l2}\bar{p}_{r2} & 0 & \bar{p}_{l2}p_{r2} & 0 & 0 & 0 & p_{l2}\bar{p}_{r2} & 0 & p_{l2}p_{r2} & 0 & 11 \\
        0 & 0 & \bar{p}_{l2} & 0 & 0 & 0 & 0 & 0 & p_{l2} & 0 & 12 \\
        0 & 0 & 0 & 0 & 0 & 0 & \bar{p}_{r1} & p_{r1} & 0 & 0 & 20 \\
        0 & 0 & 0 & 0 & 0 & 0 & \bar{p}_{r2} & 0 & p_{r2} & 0 & 21 \\
        1-p_s & 0 & 0 & 0 & 0 & 0 & 0 & 0 & 0 & p_s & 22\\
        \bar{p}_{l1}\bar{p}_{r1} & \bar{p}_{l1}p_{r1} & 0 & p_{l1}\bar{p}_{r1} & p_{l1}p_{r1} & 0 & 0 & 0 & 0 & 0 & \mathrm{S} \\
    \end{block}
    \end{blockarray}
\end{equation}
\end{table*}
We still assume that each EG round and ES all take identical time $\tau$. Now the possible system states are as follows:
\begin{enumerate}
    \item State $ij$, $i,j=0,1,2$: the $i$th EG round on the left succeeds, and the $j$th EG round on the right succeeds.
    \item State S: ES success.
\end{enumerate}
The success probability of the left (right) link's $i$th EG round is $p_{l(r),i}$. The possible transitions between them are as follows:
\begin{enumerate}
    \item If the current state is $ij$, $i,j=0,1$, the next step is the $(i+1)$th EG round on the left link and the $(j+1)$th EG round on the right link. The probability to transition to state 00 is $(1-p_{l,i+1})(1-p_{r,j+1})$, the probability to transition to state $(i+1)0$ is $p_{l,i+1}(1-p_{r,j+1})$, the probability to transition to state $0(j+1)$ is $(1-p_{l,i+1})p_{r,j+1}$, and the probability to transition to state $(i+1)(j+1)$ is $p_{l,i+1}p_{r,j+1}$.
    \item If the current state is $i2$, $i=0,1$, the next step is the $(i+1)$th EG round on the left link. The probability to transition to state 02 is $(1-p_{l,i+1})$, and the probability to transition to state $(i+1)2$ is $p_{l,i+1}$.
    \item If the current state is $2j$, $j=0,1$, the next step is the $(j+1)$th EG round on the right link. The probability to transition to state 20 is $(1-p_{r,j+1})$, and the probability to transition to state $2(j+1)$ is $p_{r,j+1}$.
    \item If the current state is 22, the next step is ES. The probability to transition to state 00 is $(1-p_s)$, and the probability to transition to state S is $p_s$.
    \item If the current state is S, the next step is EG on both links. The probability to transition to state 00 is $(1-p_{l,1})(1-p_{r,1})$, the probability to transition to state 01 is $(1-p_{l,1})p_{r,1}$, the probability to transition to state 10 is $p_{l,1}(1-p_{r,1})$, and the probability to transition to state 11 is $p_{l,1}p_{r,1}$.
\end{enumerate}
The transition matrix is shown in Eqn.~\ref{eqn:double-heralded-pmat}, where the subscript ``dhs'' denotes double-heralded EG and swapping, and $\Bar{p}_{m,i}=1-p_{m,i}$ with $m=l,r$, $i=1,2$. From the transition matrix we can obtain the equilibrium probability of state S, $\Tilde{\pi}_\mathrm{S,dhs}$. While we omit its explicit expression, we visualize it for $p_{r1}=p_{l1}=p_1$ and $p_{r2}=p_{l2}=p_2$ in Fig.~\ref{fig:2h-swap-12} and for $p_{r1}=p_{r2}=p_r$ and $p_{l1}=p_{l2}=p_l$ in Fig.~\ref{fig:2h-swap-lr}, both with $p_s=1$. We see similar bottleneck effects as before, and also the symmetry and asymmetry, which reveal independence between two links, and different contributions from two heralding rounds, respectively. We can also derive the latency variance $\mathrm{Var}[L_\mathrm{dhs}]/\tau^2$ from the fundamental matrix. We visualize the ratio between the latency standard deviation $\sqrt{\mathrm{Var}[L_\mathrm{dhs}]}$ and the mean latency $\tau/\Tilde{\pi} _\mathrm{S,dhs}$, for $p_{r1}=p_{l1}=p_1$ and $p_{r2}=p_{l2}=p_2$ in Fig.~\ref{fig:2h-swap-latency-std-12} and for $p_{r1}=p_{r2}=p_r$ and $p_{l1}=p_{l2}=p_l$ in Fig.~\ref{fig:2h-swap-latency-std-lr}, both with $p_s=1$. One can see that they share similar features with Fig.~\ref{fig:2h-gen-latency-std} and Fig.~\ref{fig:1h-swap-latency-std}, respectively.

\section{Generalization of results}\label{sec:extend}
Generalizations to more complicated scenarios, including multiple quantum memories per node, longer repeater chains, and quantum memory decoherence, must include more possible system states, and transitions between system states will also be more complicated. The number of system parameters will increase as well. Therefore, the analytical intuition of analytical formulae of statistical figures of merit such as throughput and latency will be blurred. Nevertheless, in this section we show how the previous two-link results offer insights into extended scenarios with more links, without specific Markov chain formulation. Specifically, we demonstrate the estimation of long-run throughput in $k$-level ($2^k$-link) repeater chains with nested ES and single-heralded EG~\cite{briegel1998quantum,santra2019quantum,zang2023entanglement}, where all intermediate repeater nodes have two memories and two end nodes have one memory each. For simplicity, we assume that the EG success probability for each elementary link is $p$ and ES deterministically succeeds, that is, $p_s=1$. We choose the elementary link 1-way communication time $\tau$ as the time unit, and the throughput is measured in $1/\tau$. 

Based on the nested structure, we study two alternative types of estimation by recursively using Eqn.~\ref{eqn:throughput_shs}:
\begin{enumerate}
    \item $\bar{T}_k\approx\Tilde{\pi}_\mathrm{S,shs}(2^{k-1}\bar{T}_{k-1},2^{k-1}\bar{T}_{k-1})/2^{k-1},~ k>1$,
    \item $\bar{T}_k\approx\Tilde{\pi}_\mathrm{S,shs}(\bar{T}_{k-1},\bar{T}_{k-1}),~ k>1$.
\end{enumerate}
We take $\Tilde{\pi}_\mathrm{S,shs}$ as a bivariate function by fixing $p_s=1$ in the trivariate Eqn.~\ref{eqn:throughput_shs}, and in both cases we take $T_1=\Tilde{\pi}_\mathrm{S,shs}(p,p)$. The central idea for such estimations is that entanglement distribution in the $(k-1)$-level repeater chain can be effectively viewed as 1-link EG for the highest ($k$th) level ES in the $k$-level repeater chain. The first way takes into account that the time needed for the $k$th level ES is $2^{k-1}$ times the  time unit. Thus when taking the $k$th level ES time as the time unit, the value of throughput for $(k-1)$th level ES needs to be multiplied by $2^{k-1}$, which gives $\Tilde{\pi}_\mathrm{S,shs}(2^{k-1}\bar{T}_{k-1},2^{k-1}\bar{T}_{k-1})$ throughput for the $k$th level. In the end, we need to transform the time unit back to $\tau$, which gives the $2^{k-1}$ denominator. In contrast, the second way of estimation is a simplified version that is dedicated to manifest the abstract recursive structure.

We plot simulation\footnote{https://github.com/allenyxzang/Nested\_Repeater\_Chain} results and analytical estimations of mean throughput for nested repeater chains with $k=2,3,4$ levels in Fig.~\ref{fig:nested_est}, where type (1) estimations are in solid lines and type (2) in dashed lines, while the simulation results are points. Simulation correctness is verified by comparison with analytical results for the two-link scenario. We run the simulation for 100,000 time steps to get one trajectory, and for each pair of $k$ and $p$ we simulate 1,000 trajectories to obtain statistics. The variance is not depicted because it is too small to be shown. We observe that both types of estimations demonstrate qualitative features that are aligned with simulation results, while quantitatively higher repeater chain levels have higher errors, which are due to error accumulation over more rounds of recursion. The na\"ive approach (2) provides surprisingly accurate estimation, as a result of two sources of error offsetting each other: the coarse-grained effective one-link EG overlooks detailed correlations and thus underestimates throughput, while ignoring the ES time overhead overestimates throughput. Our results demonstrate the effectiveness of using prototypical analytical results to obtain quantitative intuition for more complicated scenarios.

\begin{figure}[t]
    \centering
    \includegraphics[width=\linewidth]{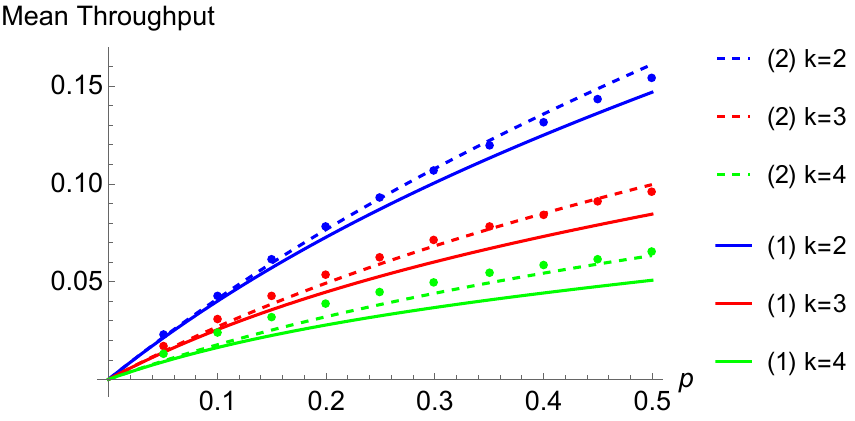}
    \caption{Simulated results and analytical estimations of mean throughput (in $1/\tau$) of $k$-level nested repeater chains.}
    \label{fig:nested_est}
\end{figure}
\section{Conclusion} \label{sec:conclusion}
We formulated quantum repeater network protocols using Markov chain formalism and performed extensive analytical studies of the long-run throughput and the on-demand latency, in single- and two-link quantum repeater chains. The effectiveness of the analytical results is shown to go beyond simple scenarios by comparison with simulation of more complicated scenarios. Another important novel aspect of this work is the study of multiheralded entanglement generation protocols. Our results contribute new insight into the performance of quantum repeater networks in practical scenarios. Moreover, the derived formulae can serve as valuable reference resources for the quantum networking community, for example, as ground truths for quantum network simulation validation, and examples of analytical modeling of quantum network dynamics.

\section*{Acknowledgements}
A. Z. thanks Boxuan Zhou for inspiring discussions. We also thank Naphan Benchasattabuse, Michal Hajdu\v{s}ek, Alexander Kolar, Kento Samuel Soon, Kentaro Teramoto, and Rodney Van Meter for valuable feedback on this work. We acknowledge support from the NSF Quantum Leap Challenge Institute for Hybrid Quantum Architectures and Networks (NSF Award 2016136). This material is based upon work supported by the U.S. Department of Energy, Office Science, Advanced Scientific Computing Research (ASCR) program under contract number DE-AC02-06CH11357 as part of the InterQnet quantum networking project.


\bibliography{references}
\bibliographystyle{IEEEtran}


\end{document}